\documentclass[runningheads]{llncs}
\usepackage[T1]{fontenc}
\usepackage{graphicx}
\usepackage{cite}
\usepackage{amsmath,amssymb,amsfonts}
\usepackage{textcomp}
\usepackage{xcolor}
\usepackage[hidelinks]{hyperref}
\usepackage{booktabs}
\usepackage{array}
\usepackage{microtype}
\usepackage{subcaption}
\usepackage{url}
\usepackage{xurl}

\begin{document}
\title{Redesigning Trust: Replacing Dark Patterns with Fair Choice Architecture in Financial Interfaces}
\titlerunning{Replacing Dark Patterns with Fair Choice Architecture}
%
\author{Oluwadamilola Awakan\orcidID{0009-0007-4950-4073} \and Tawan Aroonwechkul\orcidID{0009-0008-7561-1910} \and
Roshan Gunjoor\orcidID{0009-0005-2712-0250} \and
Supriya Khadka\orcidID{0009-0008-4126-624X} \and
Sanchari Das\orcidID{0000-0003-1299-7867}}

%
\authorrunning{Awakan et al.}
%
\institute{George Mason University\\
\email{\{oawakan,taroonwe,rgunjoor,skhadk,sdas35\}@gmu.edu}}
\maketitle              
\begin{abstract}
Digital financial platforms make enrollment effortless and cancellation laborious. This asymmetry is a dark pattern that manipulates users who have already decided to leave. Existing work identifies such patterns after deployment, and regulators sanction them after harm, yet neither provides designers with a criterion for building interfaces that avoid manipulation. We model the provider as an adversary whose instrument is effort and express fairness as a constraint requiring that leaving never cost more than joining. Defining interaction cost over navigation steps, mandatory inputs, and confirmation prompts, we prove that this constraint holds for every assignment of effort weights if and only if no component of the exit flow exceeds its counterpart at entry. Fairness is therefore verifiable by counting rather than by estimating cognitive effort, and exact equivalence is unnecessary because exit legitimately requires fewer inputs than entry. We instantiate the model, together with invariants for visual parity and linguistic neutrality, in a mobile credit card prototype with parallel sign-up and cancellation workflows.

\keywords{Dark Patterns \and Choice Architecture \and Interaction Cost \and User Autonomy \and Human-Centered Security}

\end{abstract}
\section{Introduction}
The design of digital financial services shapes consumer behavior as decisively as the terms of the services themselves~\cite{shahid2022examining, maziriri2025customer, kishnani2023assessing}. Credit card platforms make this visible through a persistent asymmetry. Enrollment is engineered to be frictionless and requires only a few steps, whereas cancellation is obstructed with additional steps, ambiguous options, and obscured pathways~\cite{sheil2024staying, nembaware2025dark}. Because the effort required to leave exceeds the effort required to join, users may remain enrolled longer than they intend, exposing them to ongoing financial obligations that conflict with their original choices~\cite{sa2025hidden}.

This asymmetry reflects a dark pattern, a design strategy that steers users toward decisions contrary to their intended goals~\cite{gray2018dark, mathur2019dark}. Unlike conventional usability shortcomings, the resulting friction is intentionally introduced to increase retention~\cite{runge2023dark}. By exploiting visual hierarchy, interface wording, and users' tendency to avoid effortful tasks, such designs can discourage cancellation even when users intend to proceed. In the context of credit card services, where continued enrollment may result in recurring fees, interest charges, and other financial obligations, these effects have significant financial consequences~\cite{brenncke2024regulating, dinner2011partitioning}. Because these interfaces interfere with users' ability to make informed and intentional decisions about their financial accounts, protecting against such manipulation is a human-centered security challenge as much as a usability one. Addressing this challenge requires more than detecting manipulative interfaces after deployment; it requires a principled way to prevent them during design.

To this end, we propose a symmetric interaction model that structurally aligns the onboarding and offboarding of a financial service, ensuring that leaving is never more difficult than joining. We define symmetry in terms of interface dimensions that providers directly control and can manipulate, including the number of navigation steps and the number of confirmation prompts. Fairness is expressed as a single ordinal constraint in which the interaction cost of offboarding must not exceed that of onboarding. This formalization recasts fairness from an ethical aspiration into a measurable property that can be specified as a design invariant and audited against an implemented interface. We instantiate the model in a high-fidelity mobile prototype with parallel sign-up and cancellation workflows and verify it against the constraint by structured inspection.

Specifically, this paper makes the following contributions:
\begin{itemize}
    \item We formulate fair choice architecture as an enforceable design requirement, where the cost of leaving a service must never exceed the cost of joining it on the dimensions a provider controls.
    \item We introduce a measurable interaction-cost formulation over navigation steps, mandatory inputs, and confirmation prompts, and show that the verdict of the fairness constraint is independent of the effort weights, reducing verification to componentwise counting.
    \item We develop a mobile credit card prototype demonstrating how structural, visual, and linguistic symmetry can be specified as design invariants and audited in an implemented interface.
\end{itemize}

\section{Related Work}
Dark patterns have been documented extensively across digital interfaces, with taxonomies characterizing their mechanics and empirical work measuring their prevalence and effects~\cite{gray2018dark, luguri2021shining, lacey2023clusters, zac2025dark, kumar2025dark}. Within these taxonomies, asymmetry is a defining attribute of manipulative design, describing interfaces that make the provider-favored option easier to select than the alternative~\cite{mathur2019dark, mathur2021makes}. Recent work consolidates these overlapping vocabularies into a shared ontology spanning academic and regulatory sources and distinguishes low-level interface patterns, which afford detection, from higher-level strategies~\cite{gray2024ontology}. Cancellation friction sits at this low level, and behavioral economics names it directly. Sludge denotes friction in choice architecture that makes it harder for people to obtain beneficial outcomes, with subscription cancellations buried in menus as a canonical example~\cite{thaler2018nudge, sunstein2022sludge}. 

These concerns have also prompted regulatory action. The U.S. Federal Trade Commission has pursued enforcement against manipulative subscription-cancellation practices, including its 2023 complaint against Amazon over a flow requiring more steps to cancel than to enroll~\cite{ftc2023amazon}. Its 2024 Negative Option Rule required cancellation to be as simple as enrollment~\cite{ftc2024negativeoption}. The Eighth Circuit vacated the rule in 2025 on procedural grounds because the Commission had not conducted a required preliminary cost-benefit analysis, without reaching substantive challenges~\cite{customcomms2025}. The Commission has since initiated new rulemaking on negative-option plans~\cite{ftc2026anprm}, while state automatic-renewal statutes remain in force and increasingly prescribe permitted retention steps.

The principle is therefore named but unmeasured. Neither the regulatory text nor the surrounding literature specifies what \textit{as simple as} means operationally, and so compliance rests on case-by-case judgment about when friction becomes excessive. The same gap appears in research. Prior treatments of asymmetry are descriptive, employing it to classify interfaces that already exist and to detect patterns after deployment~\cite{mathur2019dark, runge2023dark}, with related step-based audits applied to account remediation workflows~\cite{markert2023transcontinental}. Behavioral work calls for sludge audits that measure the time and hassle a process imposes, yet supplies no threshold that separates an acceptable exit from a manipulative one~\cite{sunstein2022sludge}. Detection establishes whether manipulation is present, but it does not provide designers with a constructive criterion for building interfaces that avoid manipulation. Work on financial interfaces likewise documents cancellation friction without prescribing the structure that would remove it~\cite{sheil2024staying, nembaware2025dark}.

We supply the missing operationalization. Prior empirical work on cancellation friction, including click-count comparisons between subscription and cancellation flows~\cite{sheil2024staying}, documents asymmetry post hoc in deployed interfaces and cautions that click counts alone may not capture experienced difficulty. Our contribution differs in kind rather than degree: we do not propose an improved effort metric, but a design-time constraint whose compliance verdict is provably invariant to the choice of effort weights (Proposition~\ref{prop:wi}), so that the very ambiguity in weighting that limits click-count measures does not affect whether a workflow passes or fails. Rather than treating asymmetry as an attribute to be recognized, we treat symmetry as a constraint to be enforced, expressed as an ordering over interaction cost that a designer can build toward and an auditor can verify.

\section{The Symmetric Interaction Model}

\subsection{Threat Model}
We consider the service provider as the adversary. Unlike an external attacker, the provider controls the interface itself and therefore requires no vulnerability, no compromised credential, and no deception about matters of fact. The provider's objective is to retain a user whose intention is to leave, and the attack surface is the user's decision process rather than the system's technical boundary.

The provider exercises its control through the structure of the exit workflow. It may lengthen the path to cancellation, insert dialogues that invite reconsideration, de-emphasize the control that completes the exit, and phrase confirmations so that continuing appears prudent and leaving appears rash. Each addition is defensible in isolation, and their accumulation is the attack. The user retains formal freedom to cancel throughout; what the provider removes is the practical exercise of that freedom.

We assume the user has already formed the intention to cancel, since the design of interfaces that shape whether that intention forms is a separate problem. We further assume the provider completes cancellation once the user reaches the end of the flow, since outright refusal is a distinct and more legible violation than the effort-based manipulation we model. The manipulation is thus confined to the effort imposed along the way, which is precisely why a defense must be stated in terms of effort.

\subsection{Quantifying Interaction Cost}
\label{sec:quantifying}
A workflow is characterized by three interaction quantities. \textbf{Steps} ($S$) are structural navigation actions such as advancing a page, opening a menu, or selecting a standard control. Concretely, one step corresponds to one transition from a given screen to the next screen in the flow; a screen requiring several simultaneous actions, such as completing a multi-field form, still counts as a single step, since the number of fields is captured separately by $I$. Optional screens that a user may bypass without completing the flow's defining action are excluded from $S$, consistent with the treatment of optional inputs described below. \textbf{Inputs} ($I$) are mandatory data-entry fields requiring the user to supply information. \textbf{Prompts} ($P$) are cognitive interruptions, including confirmation dialogues, retention offers, and secondary warnings. The interaction cost of a workflow is the weighted sum

\[
C = w_s S + w_i I + w_p P, \qquad w_s, w_i, w_p \geq 0,
\]

where the weights express the relative cognitive and physical effort of each action type. We write $C_{on}$ and $C_{off}$ for the cost of onboarding and offboarding.

Both $S_{on}$ and $S_{off}$ are measured from the authenticated home screen to the first screen where the defining action, enrollment submission or cancellation request, is available. Steps needed to locate a flow’s entry point count toward that flow’s cost. This prevents providers from satisfying the componentwise constraint on a narrow subflow while burying its entry point behind undisclosed navigation.

A dialog is counted as a single prompt only if it presents one binary decision, proceed or abandon, with no more than a fixed word budget of accompanying text and no embedded data entry. A dialog that requires document upload, multi-field input, an external wait, or a channel switch (e.g., a phone call) is coded under its native category, input, step, or a distinct step for the wait or switch, rather than folded into a single prompt unit. This atomicity rule prevents a provider from consolidating multiple burdens into one nominally-counted prompt.

These quantities are not equally available to the adversary. A provider may insert arbitrarily many navigation steps and retention prompts into an exit flow without collecting a single new item of information, so $S$ and $P$ are manipulable at will. Mandatory inputs behave differently, since enrollment legitimately requires identity and financial data that the system already possesses at cancellation. Inflating $I$ during exit would mean demanding information the provider already holds, which is conspicuous and independently actionable. Symmetry must therefore be defined over the dimensions the adversary can inflate without detection, rather than over the scalar total.

\begin{definition}[Structural Symmetry]
    A pair of workflows is structurally symmetric when $S_{off} \leq S_{on}$, $P_{off} \leq P_{on}$, and additionally $S_{off} \leq \kappa_s$, $P_{off} \leq \kappa_p$, for fixed constants $\kappa_s, \kappa_p$ set independently of the workflow under audit.
\end{definition}

\begin{definition}[Cost Non-Inflation]
    A pair of workflows satisfies cost non-inflation under a weight vector
    $(w_s, w_i, w_p)$ when $C_{off} \leq C_{on}$.
\end{definition}

Cost non-inflation is the property a regulator would wish to certify, since it states that leaving is no more effortful than joining. Its apparent difficulty is that it depends on effort weights that no study has established. The following observation removes that dependence.

\begin{proposition}[Weight Independence]
\label{prop:wi}
$C_{off} \leq C_{on}$ holds for every choice of non-negative weights $(w_s, w_i, w_p)$ if and only if $S_{off} \leq S_{on}$, $I_{off} \leq I_{on}$, and $P_{off} \leq P_{on}$.
\end{proposition}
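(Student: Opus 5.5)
The plan is to rewrite the constraint as a single linear inequality in the weights and argue each direction of the equivalence separately. Set $\Delta_S = S_{on} - S_{off}$, $\Delta_I = I_{on} - I_{off}$, and $\Delta_P = P_{on} - P_{off}$. Because the cost $C$ is linear in the counts, for any weights we have
\[
C_{on} - C_{off} = w_s \Delta_S + w_i \Delta_I + w_p \Delta_P .
\]
The proposition then says: this linear form is non-negative on the whole non-negative orthant $w_s, w_i, w_p \geq 0$ exactly when all three coefficients are non-negative.

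For the ``if'' direction, I assume $S_{off} \leq S_{on}$, $I_{off} \leq I_{on}$, and $P_{off} \leq P_{on}$. Then $\Delta_S, \Delta_I, \Delta_P \geq 0$. Each product $w_s \Delta_S$, $w_i \Delta_I$, $w_p \Delta_P$ is a product of two non-negative reals, so each is non-negative, and so is their sum. Hence $C_{off} \leq C_{on}$ for every admissible weight vector.

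For the ``only if'' direction, I use the contrapositive and evaluate at coordinate weight vectors. Suppose the componentwise condition fails, say $S_{off} > S_{on}$, so $\Delta_S < 0$. Choose $(w_s, w_i, w_p) = (1, 0, 0)$. This is admissible because the proposition quantifies over all non-negative weights and does not require them to be strictly positive. At this choice $C_{on} - C_{off} = \Delta_S < 0$, which violates $C_{off} \leq C_{on}$. The same argument with $(0,1,0)$ or $(0,0,1)$ handles a failure in $I$ or $P$.

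The only real care point is this converse step. It depends on zero weights being allowed. If the model had required strictly positive weights, I would instead take $(1, \varepsilon, \varepsilon)$. The counts are integers, so $\Delta_S \leq -1$, and $\varepsilon$ can be chosen small enough (for example $\varepsilon < 1/(|\Delta_I| + |\Delta_P| + 1)$) that the sum stays negative. With either convention the equivalence holds, and I expect no other obstacle.
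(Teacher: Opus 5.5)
Your proof is correct and follows the paper's argument essentially line for line. Sufficiency expands $C_{on} - C_{off}$ as a non-negative combination; necessity evaluates at the coordinate weight vector that isolates the violated component, and the $(1,\varepsilon,\varepsilon)$ perturbation handles strictly positive weights exactly as the paper's footnote does, with your explicit bound on $\varepsilon$ being a valid sharpening of the paper's ``sufficiently small.''
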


\begin{proof}
For sufficiency, observe that
\[
C_{on} - C_{off} = w_s(S_{on}-S_{off}) + w_i(I_{on}-I_{off}) + w_p(P_{on}-P_{off}).
\]

Each parenthesized difference is non-negative by assumption and each weight is non-negative, so the sum is non-negative and $C_{off} \leq C_{on}$.

For necessity, suppose some component of the offboarding flow strictly exceeds its onboarding counterpart, say $S_{off} > S_{on}$. Setting $w_s = 1$ and $w_i = w_p = 0$ gives $C_{off} - C_{on} = S_{off} - S_{on} > 0$, so the constraint fails for that weight vector. The same construction applies to $I$ and $P$.\footnote{If weights are required to be strictly positive, necessity follows by taking $w_s = 1$ and $w_i = w_p = \varepsilon$ for sufficiently small $\varepsilon > 0$.} Componentwise domination is therefore necessary. \qed
\end{proof}

\begin{corollary}[Sufficiency of Structural Symmetry]
\label{cor:ss}
If a pair of workflows is structurally symmetric and $I_{off} \leq I_{on}$, then $C_{off} \leq C_{on}$ for every non-negative weight vector $(w_s, w_i, w_p)$.
\end{corollary}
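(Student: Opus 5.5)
The plan is to reduce the corollary to the sufficiency direction of Proposition~\ref{prop:wi}, so no new estimate is needed. We first unpack the definition of structural symmetry and keep only the two componentwise inequalities it supplies, $S_{off} \leq S_{on}$ and $P_{off} \leq P_{on}$. The absolute caps $S_{off} \leq \kappa_s$ and $P_{off} \leq \kappa_p$ play no role in the cost comparison, so we discard them. Adding the separately assumed hypothesis $I_{off} \leq I_{on}$ gives componentwise domination on all three quantities $S$, $I$, $P$.

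Next we invoke the ``if'' direction of Proposition~\ref{prop:wi}. That direction says componentwise domination implies $C_{off} \leq C_{on}$ for every non-negative weight vector $(w_s, w_i, w_p)$. For transparency we can also restate the one-line reason, namely the decomposition
\[
C_{on} - C_{off} = w_s(S_{on}-S_{off}) + w_i(I_{on}-I_{off}) + w_p(P_{on}-P_{off}).
\]
Each term is a product of a non-negative weight and a non-negative difference, so the sum is non-negative. Since the weight vector was arbitrary, the conclusion holds for all of them simultaneously, which is exactly cost non-inflation under every weight vector.

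There is no real obstacle; the only point needing care is bookkeeping. Structural symmetry alone does \emph{not} constrain $I$, so the extra hypothesis $I_{off} \leq I_{on}$ is essential. The necessity half of Proposition~\ref{prop:wi} confirms this: if $I_{off} > I_{on}$, the weight vector $w_i = 1$, $w_s = w_p = 0$ would break the conclusion. I would note this in a remark, and also observe that the caps $\kappa_s, \kappa_p$ belong to the definition for auditing reasons rather than for this implication.
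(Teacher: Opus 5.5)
Your proof is correct and matches the paper's argument. Like the paper, you extract $S_{off} \leq S_{on}$ and $P_{off} \leq P_{on}$ from structural symmetry, add the hypothesis $I_{off} \leq I_{on}$, and apply the sufficiency direction of Proposition~\ref{prop:wi}; your side remarks that the caps $\kappa_s, \kappa_p$ play no role and that the input hypothesis cannot be dropped are also accurate.
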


\noindent Componentwise domination is immediate, and the claim follows from the sufficiency direction of Proposition~\ref{prop:wi}. The second condition is not an additional design burden. Cancellation ordinarily requires no data the provider has not already collected, so $I_{off} \leq I_{on}$ holds by construction in the financial setting, and the design problem reduces to capping steps and prompts.

The relative condition alone is insufficient against a strategic provider, since inflating $S_{on}$ and $P_{on}$ relaxes the bound on $S_{off}$ and $P_{off}$ without reducing exit friction in absolute terms. The absolute ceiling removes this degree of freedom. A provider cannot purchase additional exit friction by making enrollment more burdensome, because $\kappa_s$ and $\kappa_p$ are fixed independently of $S_{on}$ and $P_{on}$.

The argument is elementary, and that is its value. Verifying fair choice architecture does not require resolving how much cognitive effort a confirmation dialogue imposes relative to a text field. The weights may remain contested, and the compliance verdict does not move. Conversely, any workflow that inflates a single component admits some weighting under which exit is more costly than entry, which is the condition a manipulative design exploits.

We emphasize what this constraint does and does not establish. Satisfying structural symmetry and cost non-inflation does not certify that an interface is free of manipulation in general; it certifies the absence of one specific, structurally verifiable form of asymmetry, namely componentwise inflation of provider-controlled interaction quantities. A workflow can satisfy the constraint and still manipulate through channels the model does not count, including linguistic framing (Section~\ref{sec:invariants}), timing, or dark patterns outside the interaction-cost dimensions considered here. The contribution is a necessary, auditable floor, not a sufficient account of fair choice architecture. Similarly, the constraint does not by itself rule out a provider making both flows uniformly burdensome; it targets asymmetry between the two, not the absolute burden of either. Bounding absolute burden is a separate design goal, addressed in part by the ceilings $\kappa_s, \kappa_p$ introduced above.

\subsection{Worked Example}
Table~\ref{tab:cost_analysis} contrasts a representative enrollment flow with a cancellation flow of the kind documented in consumer-protection enforcement, including the FTC's complaint against Amazon alleging a multi-step cancellation flow internally named the ``Iliad Flow''~\cite{ftc2023amazon, gorman2023ftc_dark_patterns}; the counts are illustrative rather than measured from any particular interface. The scalar totals differ by only $1.0$, which understates the manipulation and is an artifact of one weighting among many. The structural reading is unambiguous. Exit inflates steps by three and prompts by two while requiring nothing new from the user, so by Proposition~\ref{prop:wi} there exist weightings under which exit is strictly more costly than entry.

The proposed model caps the offboarding sequence at three steps and one prompt. Structural symmetry holds and inputs fall to zero, so cost non-inflation follows for every non-negative weighting by Corollary~\ref{cor:ss}. Fairness does not require making exit as laborious as enrollment. It requires only that no provider-controlled dimension of exit exceed its counterpart at entry.

\begin{table*}[htbp]
\caption{Interaction cost analysis. Deltas are offboarding minus onboarding. A positive delta on a provider-controlled dimension ($S$ or $P$) indicates inflated exit friction. Illustrative weights $w_s=1.0$, $w_i=1.5$, $w_p=2.0$.}
\label{tab:cost_analysis}
\centering
\renewcommand{\arraystretch}{1.2}
\resizebox{0.8\linewidth}{!}{
\begin{tabular*}{\textwidth}{@{\extracolsep{\fill}}lccccl}
\toprule
\textbf{Workflow} & \textbf{Steps ($S$)} & \textbf{Inputs ($I$)} & \textbf{Prompts ($P$)} & \textbf{Cost ($C$)} & \textbf{Constraint} \\
\midrule
\multicolumn{6}{l}{\textit{Representative Asymmetric Model}} \\
\quad Sign-Up Flow & 3 & 4 & 1 & 11.0 & \\
\quad Cancellation Flow & 6 & 0 & 3 & 12.0 & \\
\quad Differential & $+3$ & $-4$ & $+2$ & $+1.0$ & Violated \\
\addlinespace
\multicolumn{6}{l}{\textit{Proposed Symmetric Model}} \\
\quad Sign-Up Flow & 3 & 4 & 1 & 11.0 & \\
\quad Cancellation Flow & 3 & 0 & 1 & 5.0 & \\
\quad Differential & $0$ & $-4$ & $0$ & $-6.0$ & Satisfied \\
\bottomrule
\end{tabular*}
}
\end{table*}

\subsection{Design Invariants}
\label{sec:invariants}
Structural symmetry constrains what a workflow counts, and it does not constrain how the workflow presents what it counts. Two interfaces with identical $S$, $I$, and $P$ may still differ in whether the user perceives exit as available. We therefore accompany the cost constraint with two invariants that bind the presentation layer, each stated so that it can be checked against an implemented component.

\textbf{Visual parity.} Primary actions that advance and complete a workflow must be rendered with equivalent styling, size, and visual prominence. Cancellation controls may not receive subordinate visual treatment, and no option may be preselected on the user's behalf. This invariant prevents visual friction from substituting for structural friction once step and prompt counts are bounded.

\textbf{Linguistic neutrality.} Instructional and confirmation text must be factual and non-directive. Cancellation interfaces often frame departure as a loss or cast doubt on the user's decision~\cite{mathur2019dark, gray2018dark}. Neutrality requires the interface to describe the consequences of each action without evaluating them. Together with structural symmetry, these invariants close the channels through which a bounded workflow could otherwise reintroduce the friction the cap removes.

\section{Prototype Implementation}
Following prior prototype-based evaluations of e-payment interfaces~\cite{kishnani2024towards, das2024design}, we instantiate the model as a high-fidelity mobile prototype of a credit card service, comprising parallel enrollment and cancellation workflows. The prototype simulates account states such as an outstanding balance and accrued reward points without live backend integration, since the properties under examination are properties of the interface.

\begin{figure}[htbp]
\centering
\includegraphics[width=0.7\linewidth]{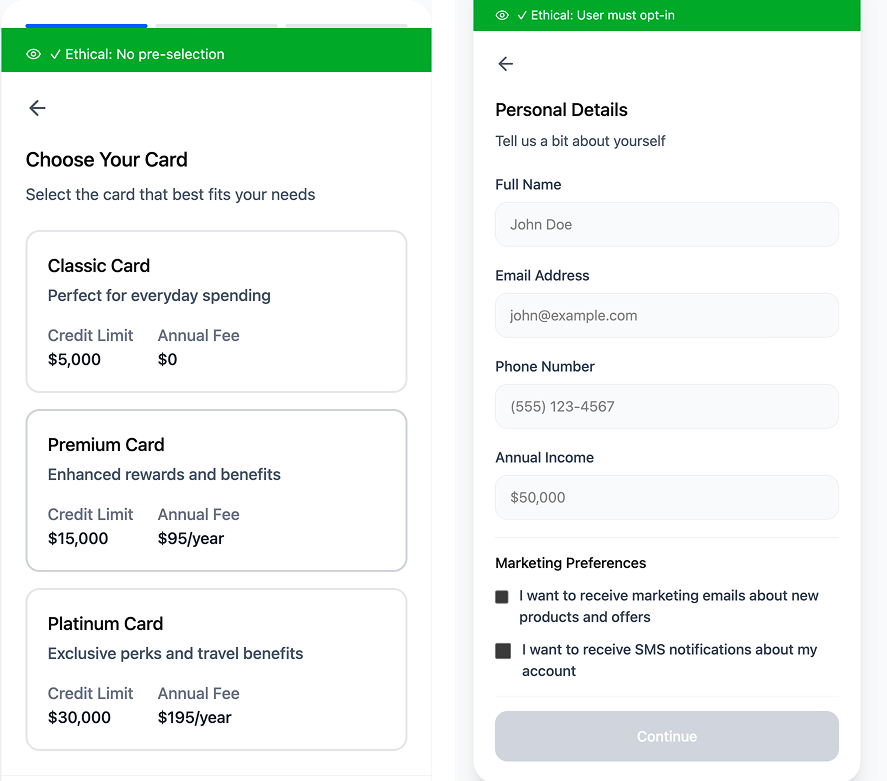}
\caption{Enrollment workflow in the prototype.}
\label{fig:onboarding}
\end{figure}

\begin{figure}[htbp]
\centering
\begin{subfigure}[b]{0.62\linewidth}
  \centering
  \includegraphics[width=\linewidth]{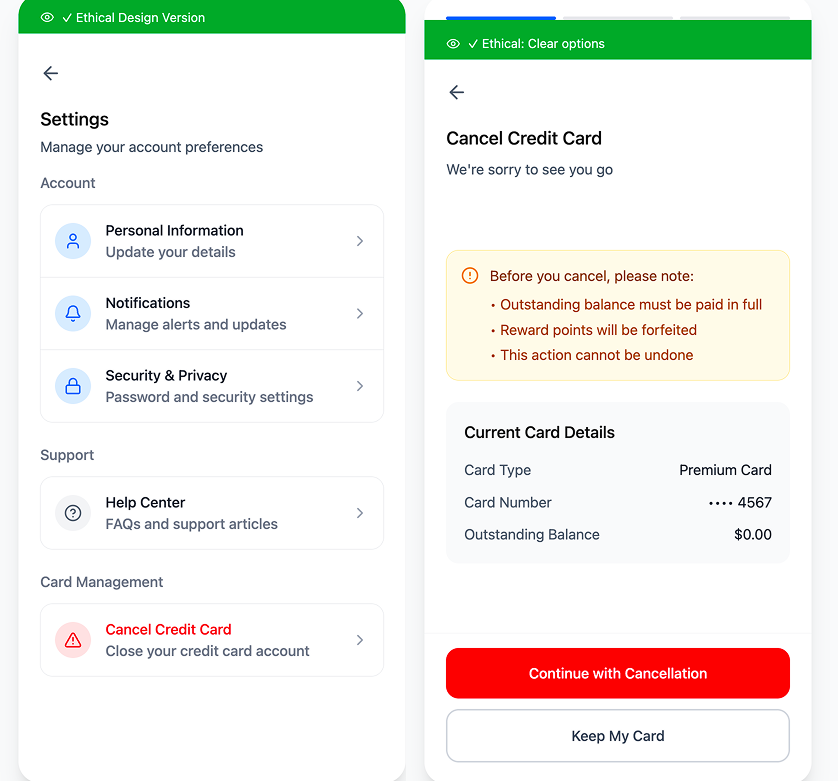}
  \caption{Cancellation workflow.}
  \label{fig:offboarding}
\end{subfigure}
\hfill
\begin{subfigure}[b]{0.30\linewidth}
  \centering
  \includegraphics[width=\linewidth]{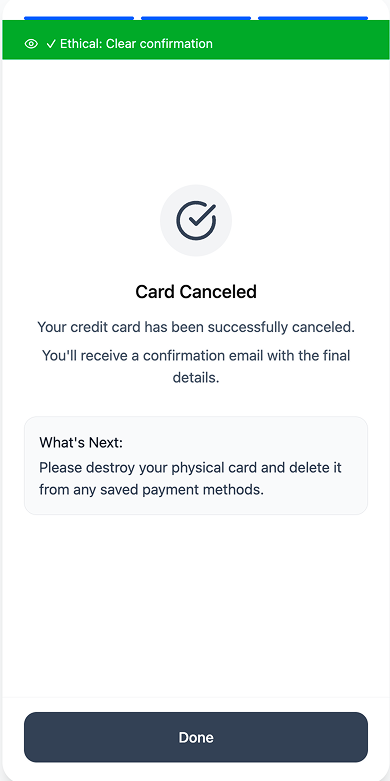}
  \caption{Confirmation screen.}
  \label{fig:confirmation}
\end{subfigure}
\caption{Cancellation workflow in the prototype and its confirmation screen.}
\label{fig:cancellation-flow}
\end{figure}

Both workflows traverse four screens connected by three mandatory transitions, realizing the symmetric row of Table~\ref{tab:cost_analysis}, where enrollment incurs $S_{on}=3$, $I_{on}=4$, $P_{on}=1$, and cancellation incurs $S_{off}=3$, $I_{off}=0$, $P_{off}=1$. Enrollment, counted from the authenticated home screen, comprises three transitions: to plan selection (step 1), to the personal and financial information screen where the four mandatory fields comprising $I_{on}$ are entered (step 2), and to registration confirmation, where a single prompt is presented (step 3). Cancellation, likewise counted from the authenticated home screen, comprises three transitions: to the account menu (step 1), to the cancellation request screen (step 2), and to account closure, where a single confirmation prompt is presented (step 3). The optional feedback screen is excluded from $S_{off}$ under the optional-screen exclusion defined in Section~\ref{sec:quantifying}.

Designing the two flows in parallel, rather than independently, enforces the structural budget, since each addition to one flow requires a corresponding justification in the other. The feedback step is optional by construction. Proposition~\ref{prop:wi} would permit a mandatory field here, since one input still falls below the four required at enrollment, but we adopt the stricter rule $I_{off} = 0$. Exit should demand nothing the provider does not already hold, and feedback is information it never needed.

Figure~\ref{fig:onboarding},~\ref{fig:offboarding} and~\ref{fig:confirmation} show the workflows. The design invariants of Section~\ref{sec:invariants} constrain the presentation layer. Primary actions such as continuation and cancellation are rendered with equivalent styling, spacing, and placement, and no option is preselected. Instructional and confirmation text is standardized to factual, non-directive phrasing, so that the interface states the consequence of each action without evaluating it.

We verified the prototype against the constraint by inspection. One author traversed both workflows and recorded $S$, $I$, and $P$ per screen, confirming that structural symmetry holds and that no primary action receives subordinate visual treatment or directive phrasing. This is a check on whether the artifact satisfies its specification, not evidence about how users behave.

\section{Discussion and Implications}
\textbf{Implications for Interface Design: }Identifying dark patterns has relied on subjective interpretation of when friction becomes excessive~\cite{mathur2019dark,mathur2021makes, lu2024awareness}. The proposed model resolves that friction into countable components, so a design team need not litigate whether a given retention prompt is manipulative. The prompt either fits within the exit budget or it does not. Ethical design thereby becomes a functional requirement rather than a guideline, and the counting procedure a designer applies during development is the same one available to an auditor afterward. Because the invariants attach to components rather than to individual screens, they could support automated validation pipelines that flag structural asymmetry and visual coercion in a component library before deployment, though we have not built such a checker.

\textbf{Regulatory and Policy Implications: }The \textit{as simple as} principle of the vacated Negative Option Rule~\cite{ftc2024negativeoption} outlived the rule itself, and what it has always lacked is an operational reading. Proposition~\ref{prop:wi} supplies one a regulator can apply without first commissioning a study to calibrate the relative burden of a dialogue against a text field, since the verdict does not depend on that calibration. A compliance test that requires only counting is also a test whose costs can be estimated, which is the ground on which the rule was struck down~\cite{customcomms2025}.

\textbf{Balancing Compliance and Autonomy: }Financial platforms operate under legal mandates requiring identity verification and record retention. These obligations bear on what a provider collects rather than on how many screens separate a user from an exit. Because the constraint caps steps and prompts while leaving inputs free to fall, a provider may satisfy every legal requirement of enrollment and still offer an exit that costs no more than entry. Where a jurisdiction genuinely mandates a step during cancellation, the constraint makes that step visible as a deliberate exception rather than allowing it to hide among discretionary friction.

\section{Limitations and Future Work}
This work contributes a formal constraint and a prototype that satisfies it, but provides no evidence of user behavior. The model and prototype are instantiated for a single domain, credit card enrollment and cancellation; other subscription-based services (streaming, SaaS, gym memberships) may involve workflow structures, such as multi-party contracts or regulatory verification steps, not captured by the three quantities considered here, and applying the constraint across domains and jurisdictions with differing compliance requirements is left to future work. Verification was conducted by inspection against the specification, leaving open whether symmetric flows improve completion time, error rate, or user trust. As stated in Section~\ref{sec:invariants}, the constraint is a necessary floor on one form of manipulation, not a validated model of user effort; a comparative study measuring completion time and user-reported fairness across symmetric and asymmetric variants of the prototype is the natural next step. The effort weights are illustrative, although the compliance verdict does not depend on them. 

Future behavioral studies should test the model's underlying assumption that the three counted quantities capture the effort users actually experience, and whether visual parity and linguistic neutrality remain perceptually distinct from structural symmetry as they are in the specification. Deploying the constraint in live services also raises the question of whether the structural budget survives jurisdiction-specific retention and verification mandates, which we treat as affecting inputs alone.

\section{Conclusion}
Asymmetric cancellation friction is manipulation carried out against a user who has already decided to leave. We modeled the provider as an adversary whose instrument is effort and expressed fairness as a single constraint over the interaction costs it controls. Offboarding need not cost the same as onboarding. It must only never cost more. Because the comparison depends on counting steps, inputs, and prompts rather than on how heavily each is weighted, the criterion can be evaluated directly on an implemented interface, and disagreement about the cost of any single obstacle need not be resolved before reaching a verdict. Designers gain a budget rather than a debate, and auditors and regulators gain a test they can apply.

\section*{Acknowledgements}
We acknowledge the Data Agency and Security (DAS) Lab at George Mason University (GMU), and Google for partially supporting this work. The opinions expressed are solely those of the authors.

\bibliographystyle{splncs04}
\bibliography{references}

\end{document}